\newcommand{\N}[2]{\mathcal{N} \left(#1,#2\right) }
\newcommand{\MN}[3]{\mathcal{MN} \left(#1,#2,#3\right) }
\newcommand{\E}[1]{\mathbb{E} \left(#1\right) }
\newcommand{\tr}[1]{\mbox{tr} \left(#1\right) }
\newcommand{\trp}[1]{\mbox{tr}_P \left(#1\right) }
\newcommand{\diag}[1]{\mbox{diag}\left(#1\right)}
\newcommand{\V}[1]{\mathbb{V}\left(#1\right)}
\newtheorem{mylem}{Lemma}
\newcommand*\samethanks[1][\value{footnote}]{\footnotemark[#1]}
\begin{document} 

\title{Network inference in matrix-variate Gaussian models with non-independent noise}
\author[1,2]{Andy Dahl\thanks{These authors contributed equally}}
\author[2]{Victoria Hore\samethanks{}}
\author[2]{Valentina Iotchkova}
\author[1,2]{Jonathan Marchini}

\affil[1]{Wellcome Trust Centre for Human Genetics, Oxford}
\affil[2]{Department of Statistics, Oxford University}

\maketitle
\date{\vspace{-5ex}}
\vskip 0.3in

\begin{abstract}
Inferring a graphical model or network from observational data from a large number of variables is a well studied problem in machine learning and computational statistics. In this paper we consider a version of this problem that is relevant to the analysis of multiple phenotypes collected in genetic studies. In such datasets we expect correlations between phenotypes and between individuals. We model observations as a sum of two matrix normal variates such that the joint covariance function is a sum of Kronecker products. This model, which generalizes the Graphical Lasso, assumes observations are correlated due to known genetic relationships and corrupted with non-independent noise. We have developed a computationally efficient EM algorithm to fit this model. On simulated datasets we illustrate substantially improved performance in network reconstruction by allowing for a general noise distribution.
\end{abstract} 

\section{Introduction}

It is now common for genetic studies of human diseases to collect multiple correlated measurements on individuals to uncover an underlying genetic network. In such studies, $P$ variables (commonly called phenotypes or traits) are measured on $N$ related individuals and stored in a data matrix $Y \in \mathbb{R}^{N\times P}$. 

In this paper we consider a model which is the sum of two matrix normal variates such that the joint covariance function is a sum of Kronecker products:
\begin{eqnarray}\label{eq:model}
Y & = & Z + \epsilon \\
Z & \sim & \MN{0}{R^{-1}}{C^{-1}}\\
\epsilon & \sim & \MN{0}{I}{D^{-1}}
\end{eqnarray}
$Z$ denotes the genetic component of the multiple phenotypes, which we model as a matrix normal distribution with row precision matrix $R$ and column precision matrix $C$. The matrix $R$, which defines the relationships between individuals (e.g. parent-child pairs), will typically be known in advance or be well estimated from genetic data collected on the $N$ individuals \cite{segura2012efficient, kang2008efficient, lippert2011fast}. We therefore assume that it is known in this paper. The term $\epsilon$ denotes the non-genetic component of the multiple phenotypes, which we model as a matrix normal distribution with row precision matrix $I$ and column precision matrix $D$.

In order to infer the genetic network between variables, we assume $C$ is sparse. A sparse precision matrix defines a network between the variables, where non-zero elements correspond to connections in the network. Inference in \eqref{eq:model} is not straight-forward as Kronecker products are not closed under addition, and so $Y$ is not matrix variate normal. Therefore even for small $N$ and $P$, the inference can be prohibitively slow as full covariance matrices of all the random variables in $Y$ contain $N^2P^2$ elements.

Several algorithms already exist for models that are related to \eqref{eq:model}. The simplest special case occurs when $\epsilon=0$. In this case a sparse $C$ can be learned using the Graphical Lasso (Glasso), which puts an $\ell_1$ penalty on $C$ and results in inferences of a Gaussian Graphical Model (GGM) \cite{banerjee2008, yuan2007}.

If, in addition to setting $\epsilon=0$, we allow $R$ to be unknown then the covariance matrices can be learned using a ``flip-flop'' algorithm \cite{leng2012sparse}. This iterative procedure involves rotating the data using one covariance matrix and estimating the other using the resulting whitened data \cite{zhang2010}. However, it was noted by \cite{binkley1988} that matrix variate models can result in information loss through cancellation, which can be avoided by adding $\epsilon$ as in \eqref{eq:model}. 

Another related model is considered in \cite{stegle2011}, in which the observational noise is assumed to be independent and identically distributed (iid) (i.e. $D=\tau I$) but $R$ is learned. Specifically, $R$ is assumed to have an approximately low rank structure so as to model confounding factors. Model inference is performed using an approximate EM algorithm called the Kronecker Glasso (KronGlasso). The noise free random effects, $Z$, are modeled as latent variables and learned alongside the other parameters. Parameters are iteratively updated based on the current estimates of the other parameters in order to optimize the model likelihood. A sparse $C$ is learned using Glasso. The implementation is not an EM algorithm because a fixed estimate of $Z$ is used to estimate $C^{-1}$. 

\cite{kalaitzis2013} recently introduced the Bigraphical Lasso to perform inference in the case where $C^{-1}$ is equal to the identity. With this constraint, the data covariance becomes a Kronecker sum, $R^{-1} \oplus D^{-1}$. As the name suggests, the Bigraphical Lasso alternately performs Glasso on $R$ and $D$ in a flip-flop type algorithm.

The assumption that $R$ is known \textit{a priori} is our focus here, which is a ubiquitous assumption in studies of human, plant and animal genetics \cite{zhao2007arabidopsis, yu2005unified, weir2006genetic}. With $R$ fixed, the KronGlasso algorithm reduces to just three components: estimating the noise precision $\tau$; finding the expectation of $Z$; and estimating a sparse $C$. We feel that the assumption of iid noise is quite restrictive, and for many applications (not just genetics) an arbitrary structure on $D$ would be more appropriate. 

We develop a class of efficient EM algorithms to estimate \eqref{eq:model} under general convex penalty functions for $C$ and $D$. The E-step of our algorithm has computational complexity of $O(NP^2 + P^3)$ at each iteration, and the M-step typically calls efficient third-party software.

In the following section, we describe our full EM algorithm for the case of arbitrary noise. We also explain the difference between our EM algorithm and the approximation of \cite{stegle2011}. Section 3 contains a comparison of the algorithms' ability to infer structure in $C$ on simulated data. We generate data with both iid and non-iid noise, showing that in the case of non-iid noise, allowing for an arbitrary $D$ improves inference of $C$ considerably. A conclusion and discussion is given in section 4. Technical lemmas are given in the appendix.

\section{Methods}

In this section, we derive an EM algorithm to maximize the likelihood of \eqref{eq:model} under arbitrary penalties on $C$ and $D$. Before we describe the algorithm, we layout out some definitions and notation.

\subsection{Definitions and notation}
The Kronecker product of matrices $U$ and $V$ is defined by
\begin{align*}
U \otimes V = 
 \left[
\begin{array}{cccc} 
u_{11} V & u_{12} V & \cdots & u_{1n} V \\
u_{21} V & u_{22} V & \cdots & u_{2n} V \\
\vdots & \vdots & & \vdots \\
u_{n1} V & u_{n2} V & \cdots & u_{nn} V 
\end{array} 
\right]
\end{align*}
We denote $x = \text{vec} (X)$ as the column-wise vectorization of a matrix $X$. If $M$ is an $np \times np$ matrix, we can represent this matrix in terms of $p \times p$ blocks, as
\[M =  \left[ \begin{array}{ccc}
M_{11} & \ldots & M_{1n}\\
\vdots & \ddots & \vdots\\
M_{n1} & \ldots & M_{nn}
\end{array} \right] \]
then define $tr_P(M)$ is the $n \times n$ matrix of traces of such blocks
\[tr_P(M) =  \left[ \begin{array}{ccc}
tr(M_{11}) & \ldots & tr(M_{1n})\\
\vdots & \ddots & \vdots\\
tr(M_{n1}) & \ldots & tr(M_{nn})
\end{array} \right] \]

\noindent
Finally, the matrix variate normal with mean zero has density
\begin{align*}
\MN{Y | 0}{A^{-1}}{B^{-1}}   = \frac{\text{exp}\big{(} -\frac{1}{2} \text{tr} [B Y^T A Y ] \big{)}}{(2 \pi)^{NP/2} |A|^{P/2} |B|^{N/2}} 
\end{align*}
\noindent
\noindent
This is a special case of a multivariate normal, where $\text{vec}(Y)$ has mean 0 and precision $B \otimes A$. 

\subsection{A penalized EM algorithm}

The EM algorithm consists of an E-step which calculates an objective function and an M-step to maximise this objective function. Treating $Z$ as a latent variable in \eqref{eq:model}, the objective function at step $t$ is given by
\begin{align}
Q\left( C, D | C^{(t)}, D^{(t)}, R \right) &= \mathbb{E}_{Z| \Theta^{(t)}} \left[ \log P\left( Y, Z | C, D, R\right) \right] \notag \\
 &= \mathbb{E}_{Z| \Theta^{(t)}} \left[\log P( Y | D, Z ) + \log P( Z | R, C ) \right] \label{eq:asym}
\end{align}
where we denote $\mathbb{E}_{Z|Y, R, C^{(t)}, D^{(t)}} := \mathbb{E}_{Z| \Theta^{(t)}}$. The individual terms in the above expression can be re-written as
\begin{align}
 \mathbb{E}_{Z| \Theta^{(t)}} \left[\log P( Y | D, Z ) \right] &\equiv \mathbb{E}_{Z| \Theta^{(t)}} \big[ N \log | D | - || ( D^{1/2} \otimes I_N) (y-z) ||_2^2  \big] \nonumber \\
&\equiv N \log | D | -  \mathbb{E}_{Z| \Theta^{(t)}} \left[  \tr{ (Y-Z) D (Y-Z)^T} \right]  \nonumber \\
&\equiv N \log | D | - N \tr{ D \Omega_1^{(t)} } \label{exp1} \\
\nonumber \\
 \mathbb{E}_{Z| \Theta^{(t)}} \left[ \log P( Z | R, C ) \right] &\equiv \mathbb{E}_{Z| \Theta^{(t)}} \big[  N \log |C| - || \left( C^{1/2} \otimes R \right) z ||_2^2   \big] \nonumber \\
&\equiv N \log |C| - \mathbb{E}_{Z| \Theta^{(t)}} \tr{ R Z C Z^T} \nonumber \\
&= N \log |C| - N \tr{ C \Omega_2^{(t)} }  \label{exp2} 
\end{align}
where $\Omega_1^{(t)}$ and $\Omega_2^{(t)}$ are estimates of the unobserved sample covariance matrices $Z^T Z$ and $\epsilon^T \epsilon$,
\begin{eqnarray}
\Omega_1^{(t)}	&:=& \mathbb{E}_{Z| \Theta^{(t)}} \left[ \frac{1}{N}  (Y-Z)^T (Y-Z) \right] \label{def:Omega-1} \\
\Omega_2^{(t)}	&:=& \mathbb{E}_{Z| \Theta^{(t)}} \left[ \frac{1}{N}  Z^T R Z  \right] \label{def:Omega-2}
\end{eqnarray}

\noindent
If $Z$ were known, the terms inside these expectations would be the obvious estimators for $C$ and $D$. 

Together, \eqref{exp1} and \eqref{exp2} imply that maximizing $Q$ is equivalent to minimizing
\begin{eqnarray}
-\log | D | - \log |C| + \tr{ D \Omega_1^{(t)} } + \tr{ C \Omega_2^{(t)} }  \label{eq:Q}
\end{eqnarray}

\noindent
We note that even though the function \eqref{eq:asym} treats $Z$ and $\epsilon$ asymmetrically, the symmetry is recovered in the EM objective function \eqref{eq:Q}. This symmetry is not recovered in the KronGlasso algorithm derived in \cite{stegle2011}.

The M-step optimizes the objective \eqref{eq:Q} with added penalty $\mathcal{P} \left(C,D\right)$. If this penalty additively separates into convex functions of $C$ and $D$, so that $\mathcal{P} \left(C,D\right) = \mathcal{P}_C \left(C\right)  + \mathcal{P}_D \left(D\right)$, the M-step becomes two uncoupled convex optimization problems:
\begin{align*}
D^{(t+1)} \leftarrow & \min_{ D \succ 0 } \left( - \log | D | + \tr{ D \Omega^{(t)}_1 }   + \mathcal{P}_D \left(D\right)  \right) \\
C^{(t+1)} \leftarrow & \min_{C \succ 0 } \left( - \log |C| + \tr{ C \Omega^{(t)}_2 } + \mathcal{P}_C \left(C\right) \right)
\end{align*}

\noindent
We primarily use $\mathcal{P} \left(C,D\right) = \lambda ||C||_1$, corresponding to the belief that $D$ is dense and that $C$ describes a sparse graphical model. The resulting EM algorithm is 
\begin{enumerate} 
\item Compute $\Omega_1^{(t)}$ and $\Omega_2^{(t)}$
\item Update $D_{t+1}	\leftarrow \Omega_1^{(t)} $; $C_{t+1}	\leftarrow \text{Glasso} \left( \Omega_2^{(t)}, \lambda \right)$
\end{enumerate}
We refer to our method as G$^3$M since it infers a Genetic Gaussian Graphical Model.
 
\subsection{Evaluating the $\Omega$'s}

To compute the expectations in (\ref{def:Omega-1}) and (\ref{def:Omega-2}), we require the full conditional distribution for $Z$, given by
\begin{align*}
P( Z |Y,C,D)	&\propto P(Y|Z,D) P(Z|C) \\
	&\propto \exp \frac{-1}{2} \left( (y-z)^T \left[ D \otimes I \right] (y-z) + z^T \left[C \otimes R \right] z \right) \\
	&\propto \exp \frac{-1}{2} \left( -2z^T \left[ D \otimes I \right] y + z^T \left[ D \otimes I + C \otimes R \right] z \right) \\
	&= \exp \frac{-1}{2} \left( -2z^T \Sigma^{-1} \left( \Sigma \left[ D \otimes I \right] y \right) + z^T \Sigma^{-1} z \right) \implies \\
	\vspace{3mm}
z &  |Y,C,D	\sim \N{\mu}{\Sigma }
\end{align*}
where
\begin{eqnarray}
\Sigma	&:=& \left[ D \otimes I + C \otimes R \right]^{-1} \label{def:Sigma} \\
\mu	&:=& \Sigma \left[ D \otimes I \right] y \label{def:mu}
\end{eqnarray}
\noindent
Define $M := \mbox{vec}^{-1} \left( \mu \right)$, that is, fill up an $N \times P$ matrix column-wise with the entries of $\mu$. Then the $\Omega$'s can be rewritten as
\begin{eqnarray}
\Omega_1^{(t)}	&= & \frac{1}{N} \left[ (Y-M_t)^T(Y-M_t) + \trp{\Sigma_t} \right]  \label{s1}\\
\Omega_2^{(t)}	&= & \frac{1}{N} \left[ M_t^T R M_t + \trp{\left( I_P \otimes R \right) \Sigma_t} \right] \label{s2} 
\end{eqnarray}
using the result 
\begin{align}
\mathbb{E} \left(  X^T R X \right)_{ij}	&= \E{ \tr{ X_{,j} R X_{,i}^T } } = \tr{ R \left( \nu_{,j}  \nu_{,i}^T + \mbox{Cov}( X_{,i}, X_{,j} ) \right) }  \notag	\\
\implies & \mathbb{E} \left[ X^T R X \right]	= \nu^T R \nu + \trp{\left( I \otimes R \right) \Theta}	\notag 
\end{align}
for any $X$ such that $\E{X} = \nu$ and $\V{ \mbox{vec}(X) } = \Theta$. 

\eqref{s1} and \eqref{s2} give explicit forms for the E-step and are only $O(NP^2)$. However, before these computations can be performed, $M$ and $\Sigma$ must be computed, which costs $O(N^3P^3)$ as written in \eqref{def:Sigma} and \eqref{def:mu}. We use simple linear algebra tricks in the next section to decrease the complexity to $O(NP^2 + P^3)$. 

\subsubsection*{Efficient computation}

We begin with computing the following (where we have dropped reference to the iteration $(t)$ for clarity),
\begin{align*}
U \Lambda_R U^T		:&= \mbox{Spectral Decomposition} \left( R \right) \\
Q_1 \Lambda_1 Q_1^T	:&= \mbox{Spectral Decomposition} \left( D^{-1/2} C D^{-1/2} \right) \\
\Lambda_1^*			:&= \left[ I + \Lambda_1 \otimes \Lambda_R \right]^{-1} \\
Q_2 \Lambda_2 Q_2^T	:&= \mbox{Spectral Decomposition} \left( C^{-1/2} D C^{-1/2} \right) \\
\Lambda^*_2			:&= \left[ I + \Lambda_2 \otimes \Lambda_R^{-1} \right]^{-1}
\end{align*}
These computations require that $C$, $D$ and $R$ are invertible, however we note that this is guaranteed by the log-determinant terms in the likelihood function. Also compute
\begin{align*}
S_1	&= \mbox{vec}^{-1} \left( \diag{ \Lambda_1^* } \right) \ast \left( U^T Y D^{1/2} Q_1 \right) \\
S_2	&= \mbox{vec}^{-1} \left( \diag{ \Lambda_2^* } \right) \ast \left( U^T Y C^{1/2} Q_2 \right) 
\end{align*}
In the remainder of this section, we show how to compute the $\Omega$'s using only the quantities above. The algebraic results that we use are given in the Appendix. 

The $\Sigma$ terms in \eqref{s1} and \eqref{s2} are easy to write in terms of the above quantities:
\begin{align*}
\mbox{tr}_P \left( \Sigma 	\right)								&= \mbox{tr}_P \left( \left[ D \otimes I + C \otimes R \right]^{-1} \right)	\notag	\\
	&= D^{-1/2} Q_1 \mbox{tr}_P \left( \Lambda_1^* \right) Q_1^T D^{-1/2} \label{eq:sig1} \tag{by Lemma 3} 	\\
\mbox{tr}_P \left( \Sigma \left( I \otimes R \right) \right)	&= \mbox{tr}_P \left( \left[ D \otimes R^{-1} + C \otimes I \right]^{-1} \right)	\notag \\
	&= C^{-1/2} Q_2  \mbox{tr}_P \left( \Lambda^*_2 \right) Q_2^T C^{-1/2} \tag{by Lemma 3}
\end{align*}

\noindent
The terms involving $M_t$ in \eqref{s1} and \eqref{s2} require a bit more work:
\begin{align}
(Y-&M_t)^T (Y-M_t)	= \mbox{tr}_P \left( (y-\mu_t) (y-\mu_t)^T \right)		\tag{by Lemma 4}	\\
	&\,{\buildrel (a) \over =}\, \mbox{tr}_P \Big[ \left( \left[ C^{-1}D \right] \otimes R^{-1} + I \right)^{-1} y  y^T \left( \left[ C^{-1}D \right] \otimes R^{-1} + I \right)^{-T} \Big] \notag \\
	&\,{\buildrel (\dagger) \over =}\,  \left( C^{-1/2} Q_2 S_2^T \right) \left( C^{-1/2} Q_2 S_2^T \right)^T	\label{eq:M-1} \notag
\end{align}
\begin{align}
M_t^T & R M_t		= \mbox{tr}_P \left( \mu_t \mu_t^T \left( I \otimes R \right)	 \right) \tag{by Lemma 4} \notag \\	
	&\,{\buildrel (b) \over =}\, \mbox{tr}_P \Big[ \left( \left[ D^{-1}C \right] \otimes R + I \right)^{-1} y  y^T \left( \left[ D^{-1}C \right] \otimes R + I \right)^{-T}   \left( I \otimes R \right) \Big] \notag \\
	&\,{\buildrel (\dagger) \over =}\, \left( D^{-1/2} Q_1 S_1^T \right) \Lambda_R \left( D^{-1/2} Q_1 S_1^T \right)^T	\label{eq:M-2}\notag
\end{align}
\noindent
(a) and (b) must be proven, while the $(\dagger)$ equations use Lemma 5. For (a),
\begin{align*}
I - \Sigma \left(D \otimes I \right)	&= I - \left( D \otimes I + C \otimes R \right)^{-1} \left( D \otimes I \right)	\\
	&= I - \left( I + (D^{-1}C) \otimes R \right)^{-1} \\
	&\,{\buildrel (*) \over =}\, \left( (C^{-1}D) \otimes R^{-1} + I \right)^{-1} \implies \\
(y-\mu)	&= \left[ I - \Sigma \left(D \otimes I \right) \right] y = \left( (C^{-1}D) \otimes R^{-1} + I \right)^{-1} y
\end{align*}
and for (b),
\begin{eqnarray*}
\mu	= \Sigma \left( D \otimes I \right) y &=& \left( C \otimes R + D \otimes I \right)^{-1} \left( D \otimes I \right) y\\
 &=& \left( \left[ D^{-1}C \right] \otimes R + I \right)^{-1} y 
\end{eqnarray*}
 
\noindent
Finally, equation $(*)$ follows from
\[ I - (I+P)^{-1} = (I+P)^{-1} P = (P^{-1} + I )^{-1} \]

\subsubsection*{Runtime and memory complexity}
 
Our manipulation of the expressions for the $\Omega$'s using eigenvalue decompositions and standard results of Kronecker products reduces these calculations to $O(N^3 + P^3)$. This can be further improved by performing a one-off eigendecomposition of $R$ at the outset, which reduces the complexity of computing the $\Omega$'s to $O(N P^2 + P^3)$. For many datasets, this will mean that the run time of our method is dominated by the optimization in the M-step. For example, Glasso has complexity $O(P^4)$, and even though this code is extremely well optimized our algorithm spends the vast majority of its time running Glasso in all our simulations.


\subsection{Relationship to the KronGlasso }

The KronGlasso algorithm of \cite{stegle2011} is only an approximate EM algorithm in that it does not use the full conditional distribution of $Z$. Rather, at each step it calculates the expectation of $Z$ and uses this to estimate $C$. In the setting where $R$ is known, the expression for $\Omega_2^{(t)}$ reduces to 
\begin{eqnarray}
 \Omega_2^{(t)} = \frac{1}{N} \left( Z^T R Z  \right) \notag 
\end{eqnarray}

Another difference between the algorithms is in estimating $D$. \cite{stegle2011} assumes $D$ is iid and learns the scalar variance parameter by gradient ascent. Although our framework allows a general $D$, a penalty can be used to constrain $D$ to be iid. Another benefit of our exact approach is that the variance parameter has an analytic solution: the update for $D=\tau I$ is
\begin{eqnarray}
 \tau^{(t+1)} \leftarrow \min_{ \tau \geq 0 } \left( - P\log ( \tau ) + \tau \tr{  \Omega^{(t)}_1 }   \right)  =  \frac{P}{\tr{\Omega_1^{(t)}}} \notag
\end{eqnarray}

\section{Simulation study}

\subsection{Data Generation}

We carried out a simulation study to illustrate the benefits of modelling non-independent noise for graphical model estimation. We simulated 40 datasets with $N=400$ individuals and $P=50$ traits according to model \eqref{eq:model}. We assumed a relatedness matrix $R$ with a block diagonal structure of 80 families of 5 siblings, so that each block of 5 individuals has off-diagonal entries equal to 0.5.

We vary $C$ and $D$ in our simulations to demonstrate different levels of sparsity, summarized in Table 1. We generate a matrix defined by Random($p$) in two steps. First a fraction $p$ of the edges are taken to be non-zero and equal. Second we add a scalar multiple of the identity such that the resulting condition number of the matrix is exactly $P$, as in \cite{leng2012sparse, kalaitzis2013}. For AR(1) we use an autocorrelation of 0.8.

We scaled $C$ and $D$ so that the signal-to-noise ratio (SNR) was 20\%, where we define 
\begin{eqnarray}
 \text{SNR} = \frac{\mathbb{E}||Z||_F^2}{\mathbb{E}||\epsilon||_{F}^2} = \frac{tr(C^{-1})}{tr(D^{-1})} \notag
\end{eqnarray}
In the genetics literature, the size of the genetic signal is measured on a trait by trait basis as the heritability $(h^2_i)$ of the $i$th trait, which is the proportion of total variance attributable to the genetic random effect. We generate equally heritable traits, so that
\begin{eqnarray}
 h^2_i = \frac{(C^{-1})_{ii}}{(C^{-1})_{ii}+(D^{-1})_{ii}} = \frac{ \tr{ C^{-1} } } { \tr{ C^{-1} }+\tr{D^{-1}} } = \frac{SNR}{1+SNR} \notag
\end{eqnarray}
Thus the heritability of each trait is constant and 0.17 in all our simulations.

\begin{table}
\begin{center}
\begin{tabular}{|c|c|} \hline
$C$				& $D$ \\ \hline
AR(1)	& AR(1) \\ \hline
Random(1\%)		& Wishart $\big{(}P\!-\!3, I_P /$ \small{$(P\!-\!3)$}$\big{)}$ \\ \hline
Random(10\%)	&  iid \\ \hline
\end{tabular}
\end{center}
\label{tab:C-D}
\caption{ Different choices for precision matrices. }
\end{table}

\subsection{ROC Curves}

We compare G$^3$M to KronGlasso and vanilla Glasso. Vanilla Glasso is run on the sample covariance $Y^T Y$ and we implemented KronGlasso ourselves based on the details in \cite{stegle2011}. We use the R package \texttt{glasso} by \cite{Friedman2008} whenever we call Glasso. To make a fair comparison with G$^3$M, which knows the true $R$, we give KronGlasso the true $R$. 

Each of these methods has a regularization parameter $\lambda$ which we vary by setting $\lambda=5^x$ for $x$ linearly interpolated between -7 and 3. For each value of $\lambda$ we infer a network and calculate its power and type I error for picking edges in $C$. Figure 1 presents ROC curves for Glasso, KronGlasso and G$^3$M, each averaged over all 40 datasets. Each plot corresponds to a unique pair of $C$ and $D$; rows and columns index $C$ and $D$ respectively. 

In the case of a dense $D$ (column 1), the motivating scenario for our method, we find that G$^3$M performs considerably better than both Glasso and KronGlasso. Our improvement is uniform; for all type 1 error levels, our method offers the greatest power. 

When $D$ is iid (column 3), our method performs worst. It is not surprising that G$^3$M loses, as the simplifying assumption of iid noise made by KronGlasso is satisfied. However, it is surprising that KronGlasso loses to vanilla Glasso. We suspect that for our choice of $R$ the parsimony of Glasso outweighs the flexibility of KronGlasso; the improvements over vanilla Glasso noted in \cite{stegle2011} used an approximately low rank $R$, which is more structured than our $R$ matrix and understandably would favor KronGlasso over vanilla Glasso.

Finally, when $D$ itself is sparse there is no uniformly dominant method. Again, this is unsuprising; G$^3$M expects a dense noise matrix while KronGlasso expects iid noise, and sparse $D$ matrices lie between these extremes. We do note, however, that G$^3$M performs drastically better for type 1 error rates less than 10\%, which is the interesting part of the ROC curve for almost all statistical applications. 

Although we expect dense noise matrices in practice, which motivated our choice not to penalize $D$, it is easy to adapt G$^3$M to model sparse noise if desired. To test this, we add the penalty $\gamma || D ||_1$ to the likelihood. We allow $ \gamma \neq \lambda$, which comes at the cost of searching over a 2 dimensional grid of regularization parameters. We superimposed the resulting ROC curve in green in the center plot of Figure 1, which uses sparse $C$ and $D$. Specifically, we fit the model for all $(\gamma,\lambda)$ pairs, optimize over $\gamma$ for each $\lambda$ and then plot the ROC curve as $\lambda$ varies. Because of this computational cost, we use two shortcuts. First, rather than optimizing $\gamma$ via cross validation, we choose it to maximize the precision of the resulting $C$, which is only possible because we know the ground truth. Second, the line is not averaged over 40 datasets but rather only one. Despite these caveats, we feel it is clear that this regularization on $D$ 
recovers most of G$^3$M's suboptimality in the case of sparse $D$, as expected.
%
%
%
%
%
%
%
%
%
%
%

\begin{figure} \label{ROCs}

\begin{flushleft}
\begin{tabular}{|c|c|ccc|} \hline

& \multicolumn{4}{c}{ $\rightarrow$ Density $\rightarrow$ } \vline \\ \hline

& & Wishart $D$	& AR(1) $D$	& Spherical $D$	\\ \hline

\parbox[t]{2mm}{\multirow{3}{*}{\rotatebox[origin=c]{90}{$\leftarrow$ Density $\leftarrow$ }}} 

& \rotatebox{90}{\ \ \ \ \ \ \ Random(1\%) $C$ }	&
\includegraphics[width=.3\linewidth]{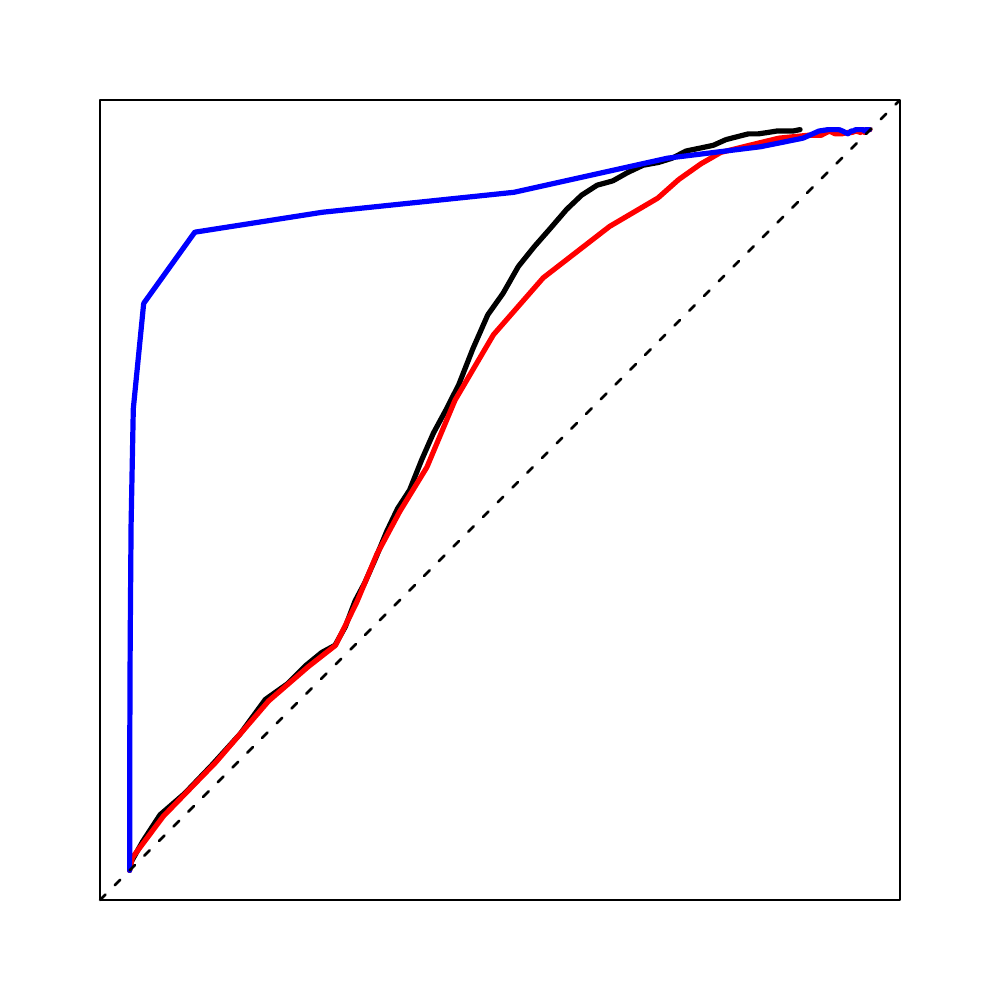} &
\includegraphics[width=.3\linewidth]{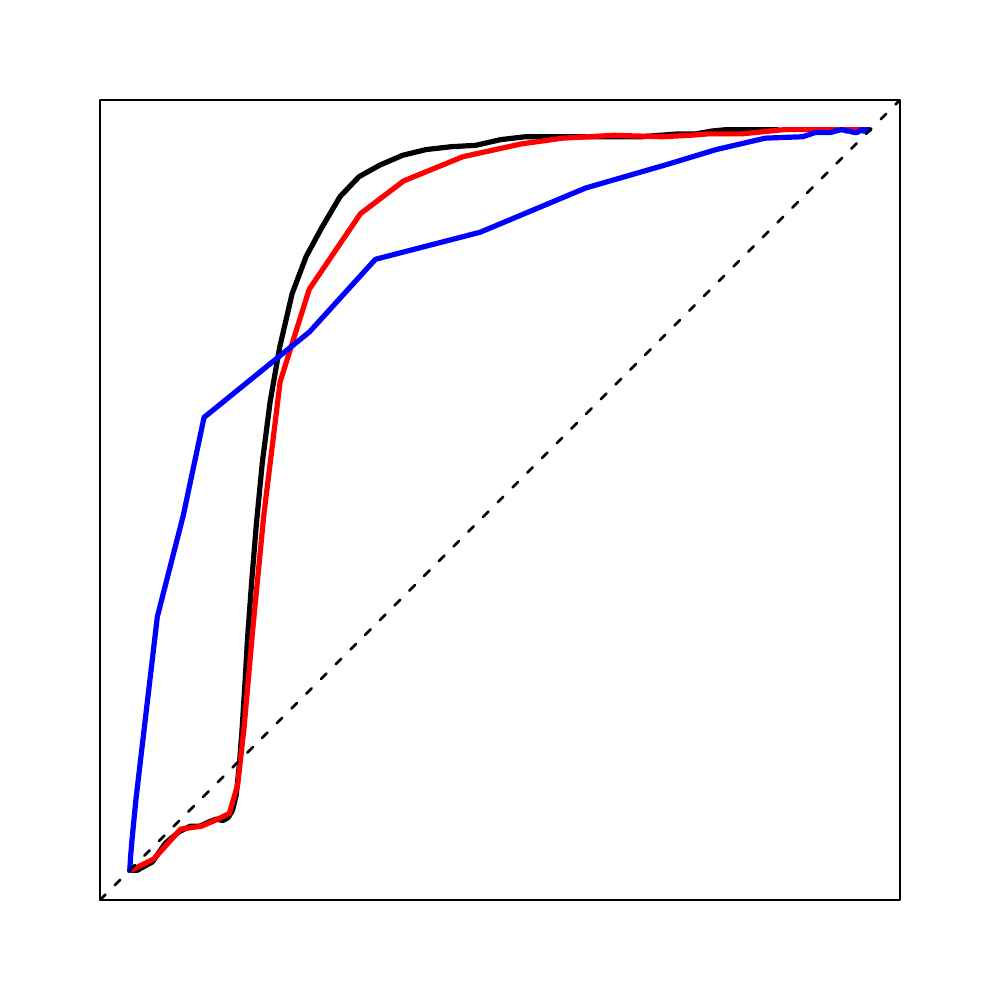} &
\includegraphics[width=.3\linewidth]{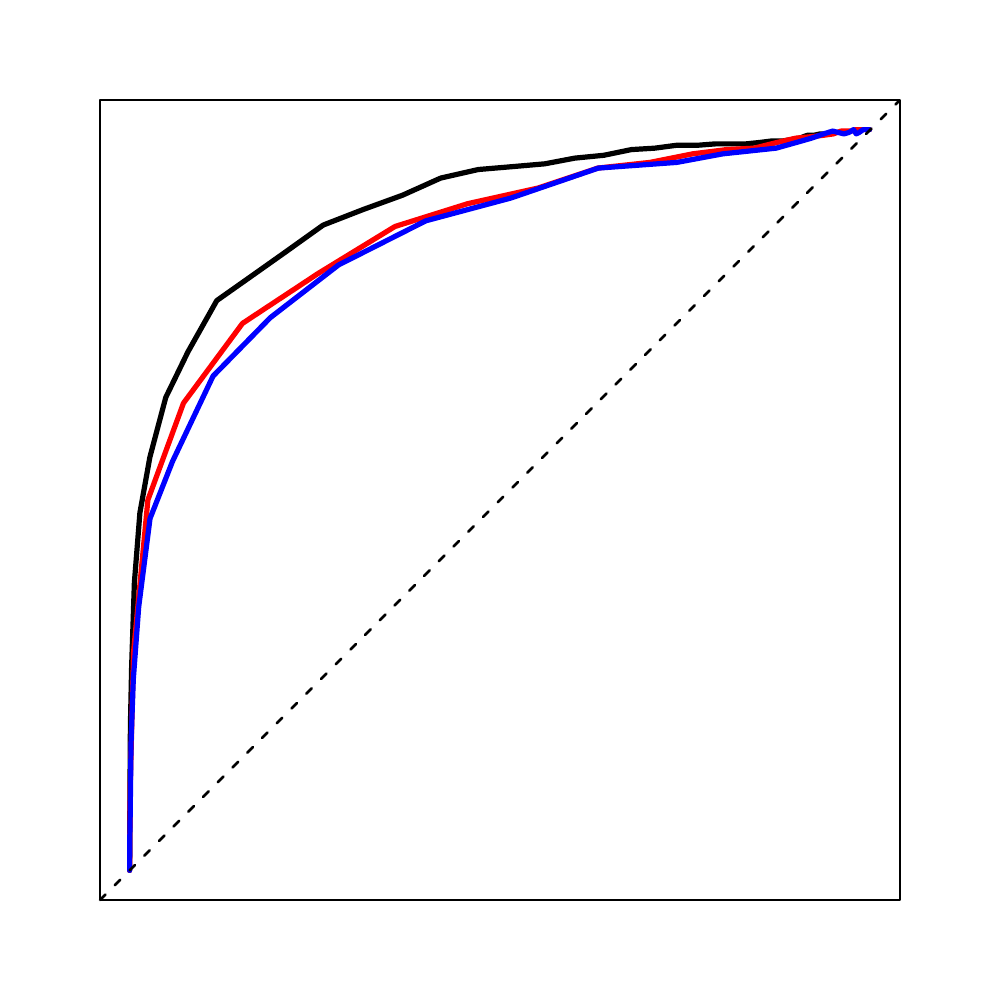} \\

& \rotatebox{90}{\hspace{1.2cm} AR(1) $C$ }	&
\includegraphics[width=.3\linewidth]{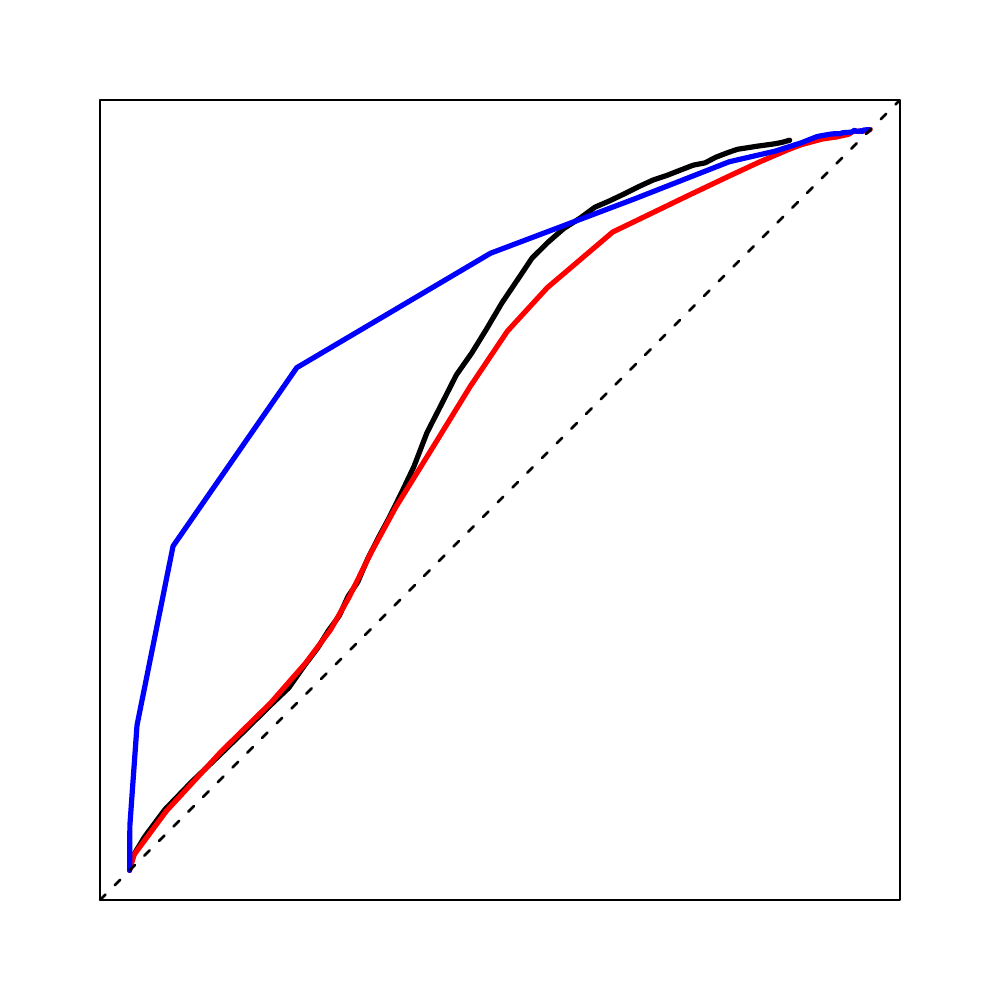} &
\includegraphics[width=.3\linewidth]{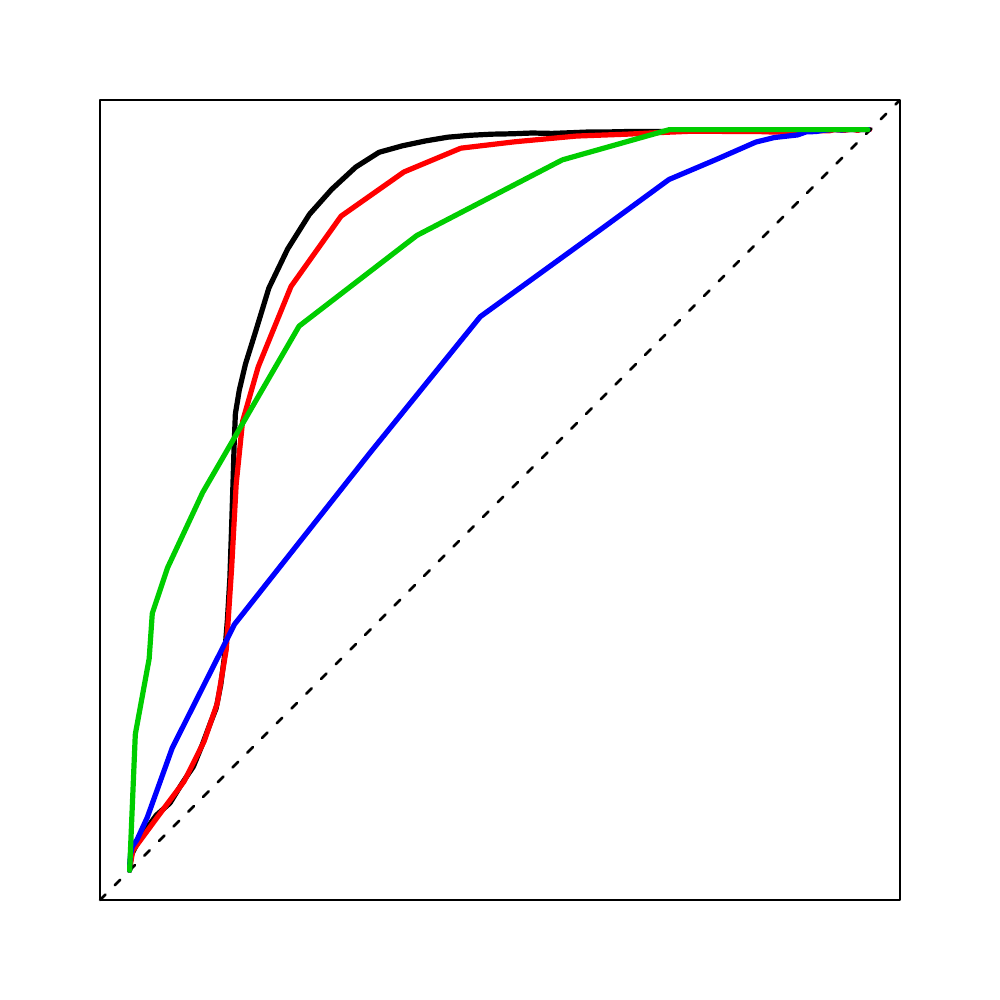} &
\includegraphics[width=.3\linewidth]{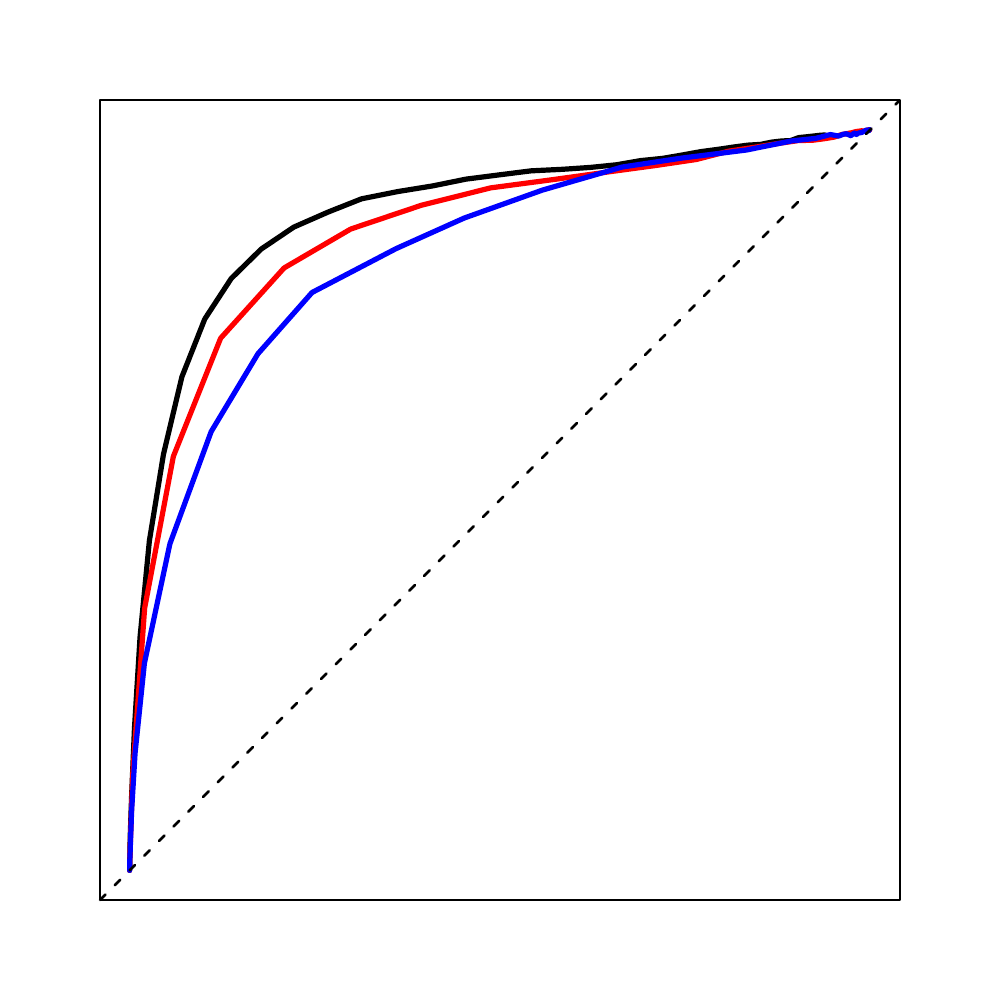} \\

& \rotatebox{90}{\hspace{0.65cm} Random(10\%) $C$ }	&
\includegraphics[width=.3\linewidth]{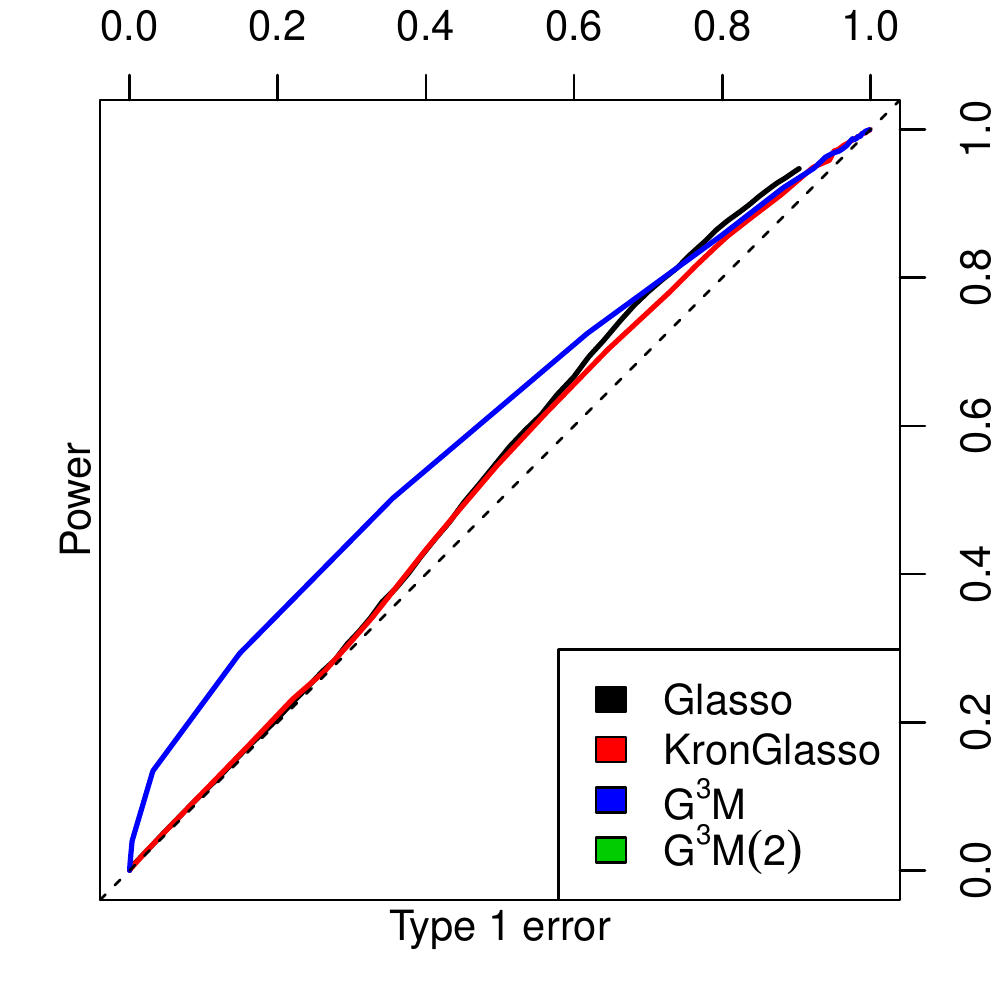} &
\includegraphics[width=.3\linewidth]{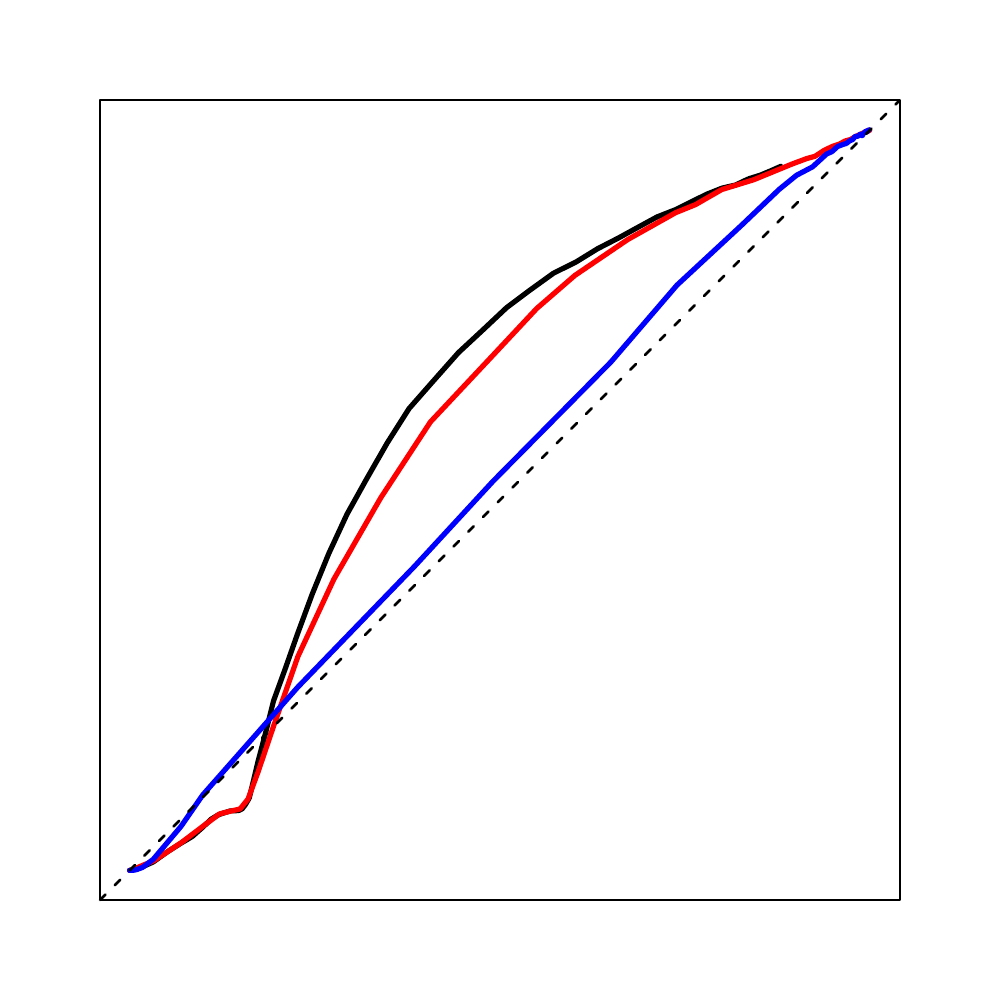} &
\includegraphics[width=.3\linewidth]{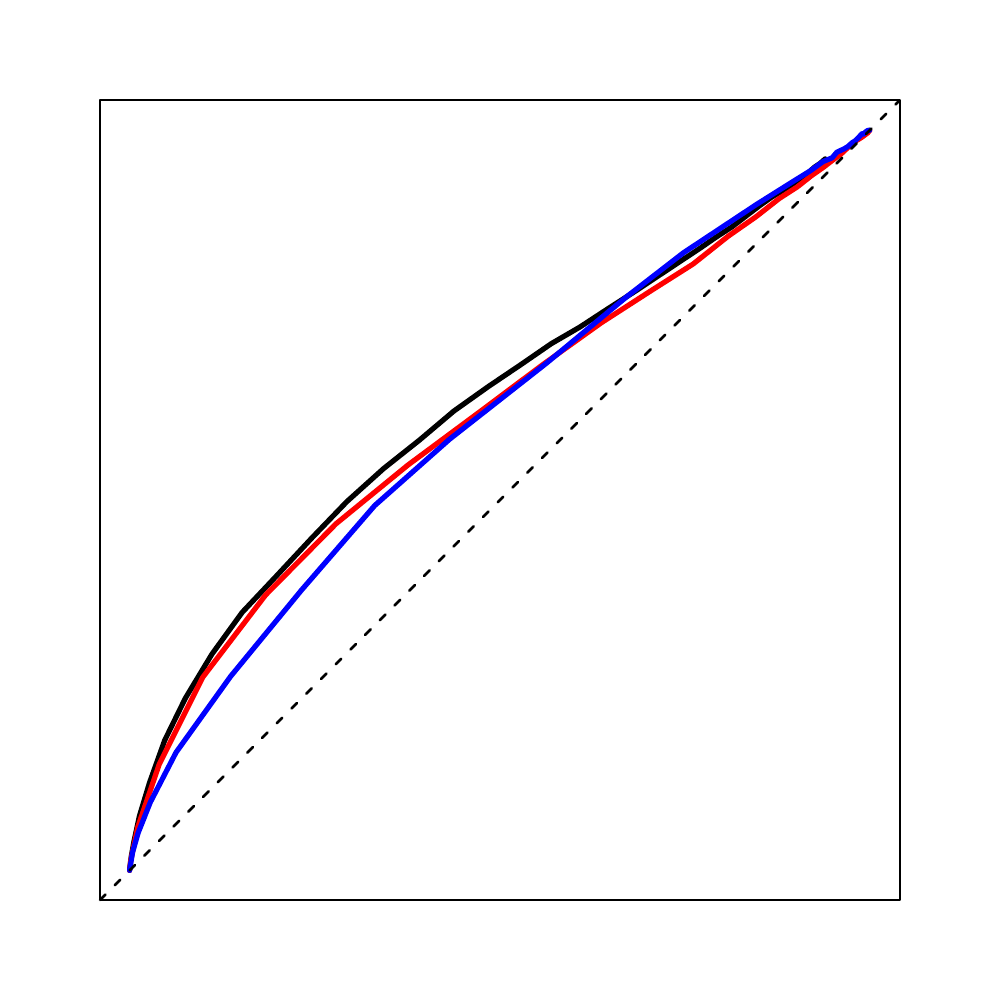} \\ \hline

\end{tabular}
\end{flushleft}

\caption{Comparison of methods for network reconstruction with different $C$ and $D$.}

\end{figure}

\subsection{Network reconstructions}

Typically, one graphical model is selected to summarize the data, and so we compare individual inferred networks from each method on a single simulated data set. This data set was generated using the Random(1\%) model for $C$ and a (dense) Wishart matrix for $D$, which we feel is realistic. Figure 2 shows each method's reconstructed network at 70\% power (shown by the dashed horizontal line). Glasso and KronGlasso return networks that are both unusable practically and give the false impression that the variables are densely related. G$^3$M, however, recapitulates most of the true relationships without including a qualitatively misleading number of false edges. Figure 2 also includes the ROC curve for the specific data set used to generate these networks which is typical of ROC curves averaged to give the (2,2) block of Figure 1.

\begin{figure} \label{simulated-network}
\begin{center}
\begin{minipage}{.45 \linewidth}
\includegraphics[width=\linewidth]{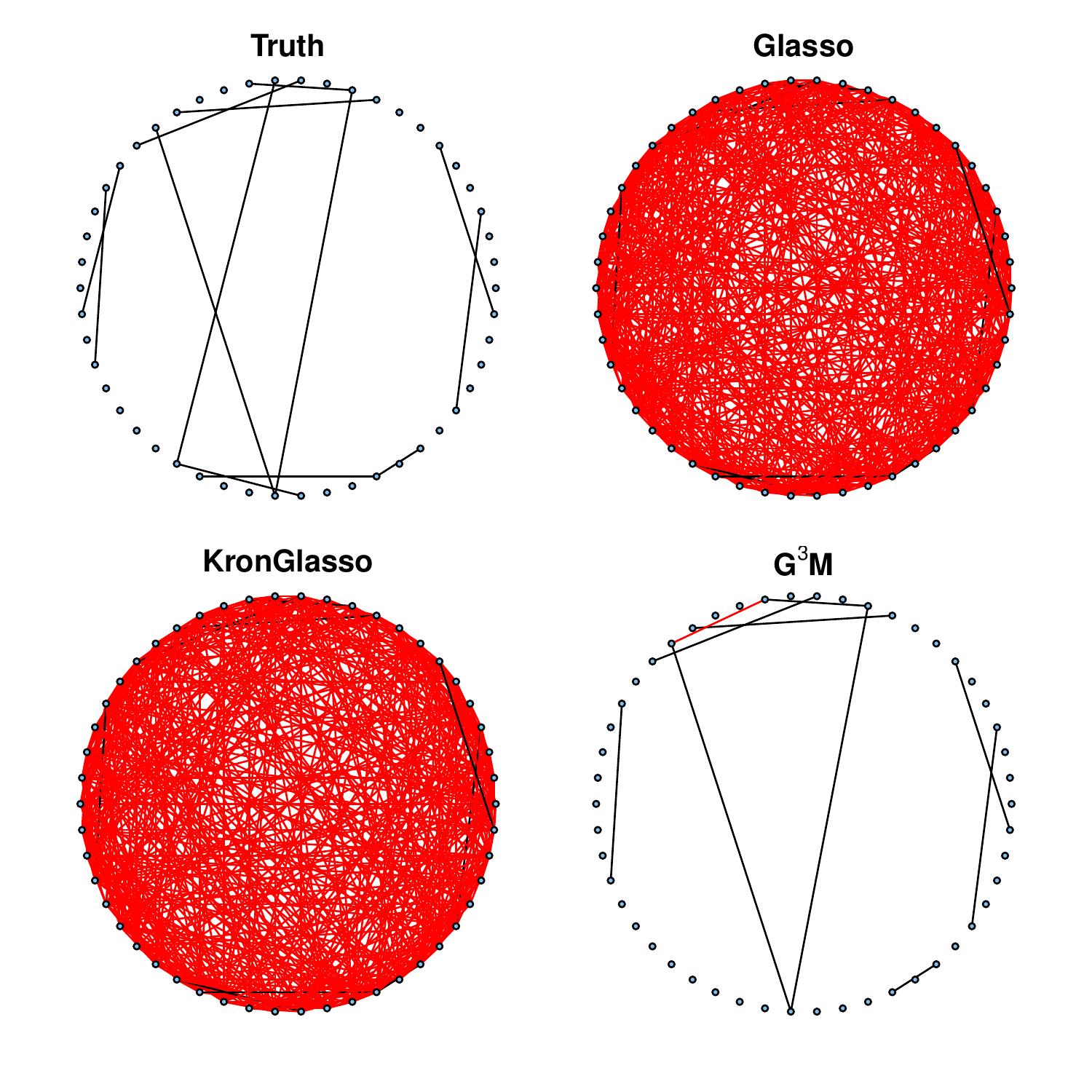}
\end{minipage}
\begin{minipage}{.45 \linewidth}
\includegraphics[width=\linewidth]{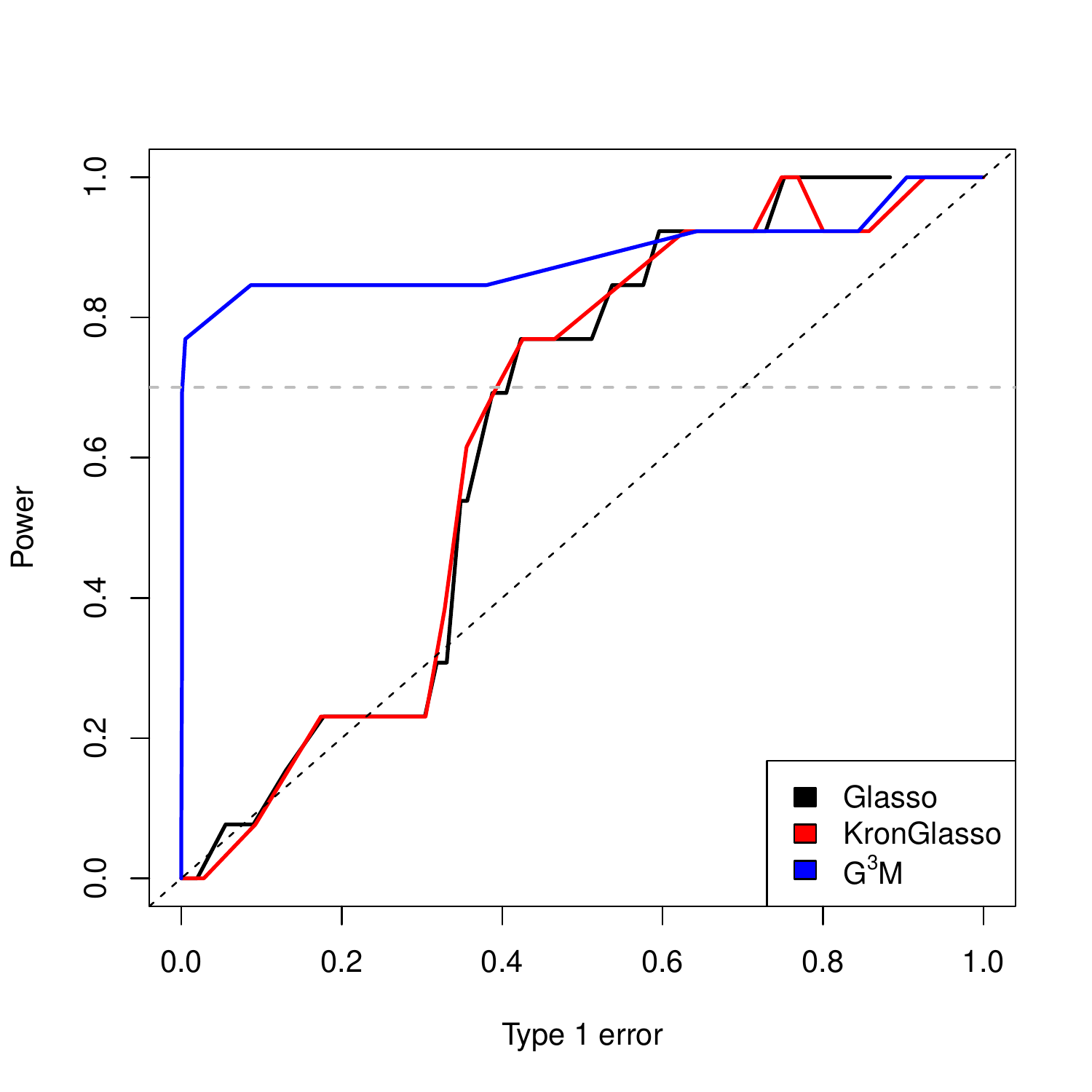}
\end{minipage}
\end{center}

\caption{Network estimation on simulated datasets with Wishart noise. Left : reconstructed networks at 70\% power. Right: ROC curve for the networks shown on the left. The dashed line is drawn at 70\% power.}
\end{figure}

\section{Discussion}

We have developed an efficient EM algorithm for estimation of a graphical model from observational data assuming that samples are correlated and observations are corrupted with non-independent noise. We assume that observations (rows of the data matrix) are correlated with a known covariance structure, a condition that is met by modern multi-phenotype genetic datasets. 

Our approach extends the KronGlasso method of \cite{stegle2011} to accommodate non-iid noise. Moreover, whereas KronGlasso is an approximate EM algorithm, we derive a full EM algorithm and use linear algebra tricks to facilitate computationally efficient inference. Results on simulated datasets show substantial benefits in modelling non-independent noise when it is present. We advocate model selection procedures, such as BIC, out-of-sample predictive error or out-of-sample likelihood, to decide whether iid or non-iid noise models are more appropriate. 

We expect that the EM algorithm can be extended to learn $R$ as in \cite{stegle2011}. At a minimum, the cost would be that the EM algorithm turns into an expectation conditional maximization algorithm, as it is unlikely that the updates for $C$ and $R$ will be easily decoupled.

In future work it will be interesting to explore other penalty functions on $C$ and $D$. A penalty function like
\begin{eqnarray}
 \mathcal{P} \left(C,D\right) = \min_{L+S=C} \left( \lambda_C^S ||S||_1 + \lambda_C^L ||L||_\ast \right) + \lambda_D ||D||_1
 \notag
\end{eqnarray}
\noindent
would model $C$ as a combination of a low-rank component $L$, corresponding to confounders, and a sparse graphical model $S$, encoding causal structure \cite{chandrasekaran2010}. This could be particularly useful in genetics, where the random effect $Z$ is sometimes used to model genome-wide causal effects (which presumably would correspond to a sparse graphical model, suggesting the penalty $||C||_1$) and $Z$ is sometimes used to model confounding population structure (which corresponds to a low-rank confounder, suggesting the penalty $||C||_\ast$). 

We have not explored the utility of this model for prediction of missing phenotypes but this will likely be an important application in real genetic studies where some phenotypes are not measured on all subjects. 


\bibliographystyle{plain}
\bibliography{library}

\newpage
\section*{Appendix: Linear algebra identities}

The first lemma gives an alternate representation of the block trace:
\begin{mylem}
\begin{eqnarray} 
\tr{ \left( X \otimes I_N \right) M } = \tr{ X \mbox{tr}_P \left[ M \right] } \label{lem:1.1} \\
\tr{ \left( I_P \otimes X \right) M } = \tr{ X \mbox{tr}_N \left[ M \right] } \label{lem:1.2}
\end{eqnarray}
\end{mylem}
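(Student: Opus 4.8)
The plan is to reduce both identities to the case of a single Kronecker product. Both sides of \eqref{lem:1.1} and \eqref{lem:1.2} are linear in $M$ (the block trace $\mbox{tr}_P$, the analogous $\mbox{tr}_N$, and the ordinary trace are all linear in their matrix argument), and every $NP \times NP$ matrix can be written as a finite sum $M = \sum_k A_k \otimes B_k$ with $A_k \in \mathbb{R}^{P \times P}$ and $B_k \in \mathbb{R}^{N \times N}$ — for instance by expanding $M$ in the basis of elementary Kronecker products. Hence it suffices to prove each identity for $M = A \otimes B$.

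Next I would record the three elementary facts needed. First, reading $A \otimes B$ as the $P \times P$ array of $N \times N$ blocks used in the definition of $\mbox{tr}_P$, the $(i,j)$ block is $a_{ij} B$, so $\mbox{tr}_P(A \otimes B) = [\,a_{ij}\tr{B}\,]_{i,j} = \tr{B}\, A$; symmetrically, taking $\mbox{tr}_N$ to be the complementary block trace — the partial trace over the $\mathbb{R}^P$ factor of the space $\mathbb{R}^P \otimes \mathbb{R}^N$ on which $M$ acts, i.e. the sum of the diagonal $N \times N$ blocks — one has $\mbox{tr}_N(A \otimes B) = \tr{A}\, B$. Second, the Kronecker mixed-product rule gives $(X \otimes I_N)(A \otimes B) = (XA) \otimes B$ and $(I_P \otimes X)(A \otimes B) = A \otimes (XB)$. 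Third, $\tr{U \otimes V} = \tr{U}\,\tr{V}$.

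Assembling these is then immediate. For \eqref{lem:1.1}, $\tr{(X \otimes I_N)(A \otimes B)} = \tr{(XA) \otimes B} = \tr{XA}\,\tr{B} = \tr{X\,(\tr{B}\,A)} = \tr{X\, \mbox{tr}_P(A \otimes B)}$, and summing over a Kronecker expansion of a general $M$ gives the claim. For \eqref{lem:1.2}, likewise $\tr{(I_P \otimes X)(A \otimes B)} = \tr{A}\,\tr{XB} = \tr{X\,(\tr{A}\,B)} = \tr{X\, \mbox{tr}_N(A \otimes B)}$. One can also skip the reduction and argue directly in blocks: writing $M$ as a $P \times P$ array of $N \times N$ blocks $M_{ij}$, we have $[(X \otimes I_N)M]_{ii} = \sum_k x_{ik} M_{ki}$, hence $\tr{(X \otimes I_N)M} = \sum_{i,k} x_{ik}\tr{M_{ki}} = \sum_i (X\, \mbox{tr}_P M)_{ii} = \tr{X\, \mbox{tr}_P M}$; and \eqref{lem:1.2} follows the same way since $I_P \otimes X$ is block diagonal in this array.

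The content is entirely routine; the only place that needs care is the convention for $\mbox{tr}_N$, which the text does not spell out. It must be fixed as the block trace complementary to $\mbox{tr}_P$ — equivalently the partial trace over the $\mathbb{R}^P$ tensor factor, i.e. the sum of the diagonal $N \times N$ blocks — so that $\mbox{tr}_N(A \otimes B) = \tr{A}\, B$ and the mixed-product identity lines up with the factoring of the trace. With any other reading of "$N \times N$ blocks" the identity \eqref{lem:1.2} would simply fail, so pinning this down is the substantive step; everything else is bookkeeping.
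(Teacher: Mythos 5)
Your proposal is correct, and its main line of argument is genuinely different from the paper's. The paper proves both identities in one shot by writing the trace as a Frobenius-type double sum, $\tr{(X\otimes I)M}=\sum_{a,b}\sum_{i,j}(X^T\otimes I^T)_{ab:ij}M_{ab:ij}$, and collapsing the inner sum against $I_{ij}$ to produce the block traces --- essentially the same computation as your ``argue directly in blocks'' alternative, just organized entrywise rather than blockwise. Your primary route instead reduces by bilinearity to $M=A\otimes B$ and invokes the mixed-product rule together with $\tr{U\otimes V}=\tr{U}\tr{V}$; this is cleaner conceptually and makes the two identities manifestly symmetric under swapping the tensor factors, at the small cost of the (routine) observation that elementary Kronecker products span all of $\mathbb{R}^{NP\times NP}$. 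Your care in pinning down the convention for $\mbox{tr}_N$ is well placed: the paper never defines $\mbox{tr}_N$ explicitly (and its displayed definition of $\mbox{tr}_P$ is stated with the roles of the two block sizes opposite to how the lemma uses it), and the reading you fix --- $\mbox{tr}_N$ as the sum of the diagonal $N\times N$ blocks, so that $\mbox{tr}_N(A\otimes B)=\tr{A}\,B$ --- is exactly the one the paper's own index computation implements via $\tr{M_{..:ij}}$.
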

\begin{proof}
\begin{align*}
\tr{ \left( X \otimes I \right) M }	&= \sum_{a,b = 1}^P \sum_{i,j = 1}^N \left( X^T \otimes I^T \right)_{ab:ij} M_{ab:ij} \\
	&= \sum_{a,b = 1}^P \sum_{i,j = 1}^N \left( X_{ab}^T I_{ij} \right) M_{ab:ij} \\
	&= \sum_{a,b = 1}^P  \left( X_{ab}^T \right) \tr{ M_{ab:..} } \\
	&=\tr{ X \mbox{tr}_P \left[ M \right] }
\end{align*}
Similarly,

\begin{align*}
\tr{ \left( I \otimes X \right) M }	&= \sum_{a,b = 1}^P \sum_{i,j = 1}^N \left( I^T \otimes X^T \right)_{ab:ij} M_{ab:ij} \\
	&= \sum_{a,b = 1}^P \sum_{i,j = 1}^N \left( I_{ab}^T X_{ij}^T \right) M_{ab:ij} \\
	&= \sum_{i,j = 1}^N  \left( X_{ij}^T \right) \tr{ M_{..:ij} } \\
	&=\tr{ X \mbox{tr}_N \left[ M \right] }
\end{align*}

\end{proof}

\noindent As a corollary,
\begin{mylem}
If $U$ is an orthogonal matrix,
\begin{eqnarray} 
\mbox{tr}_P \left[ \left( Q \otimes U \right) W  \left( Q \otimes U \right) ^T\right] = Q \mbox{tr}_P \left( W \right) Q^T \label{lem:2}
\end{eqnarray}
\end{mylem}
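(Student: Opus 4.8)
The plan is to deduce this from Lemma 1 by testing the $P \times P$ matrix $\mbox{tr}_P[(Q \otimes U) W (Q \otimes U)^T]$ against an arbitrary $P \times P$ matrix $X$ under the trace pairing $(X,A) \mapsto \tr{XA}$, which separates $P\times P$ matrices. Concretely, I would fix an arbitrary $X \in \mathbb{R}^{P \times P}$ and apply \eqref{lem:1.1} with $M = (Q \otimes U) W (Q \otimes U)^T$ to get
\[ \tr{ X \, \mbox{tr}_P\!\left[ (Q \otimes U) W (Q \otimes U)^T \right] } = \tr{ (X \otimes I_N)(Q \otimes U) W (Q \otimes U)^T }. \]

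Next I would simplify the right-hand side using the mixed-product property of Kronecker products together with cyclic invariance of the trace: since $(X \otimes I_N)(Q \otimes U) = (XQ) \otimes U$ and $(Q \otimes U)^T = Q^T \otimes U^T$, the right-hand side equals $\tr{ ((XQ)\otimes U)\, W\, (Q^T \otimes U^T) } = \tr{ ((Q^T X Q) \otimes (U^T U))\, W }$ after cycling $Q^T\otimes U^T$ to the front. This is the unique place where orthogonality of $U$ is used: $U^T U = I_N$, so the expression collapses to $\tr{ ((Q^T X Q) \otimes I_N)\, W }$. Applying \eqref{lem:1.1} once more, now with the $P\times P$ matrix $Q^T X Q$ in place of $X$, and using cyclicity again, this becomes $\tr{ (Q^T X Q)\, \mbox{tr}_P(W) } = \tr{ X\, Q\, \mbox{tr}_P(W)\, Q^T }$.

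Combining the two identities, I would conclude $\tr{ X A } = \tr{ X B }$ for every $X \in \mathbb{R}^{P\times P}$, where $A = \mbox{tr}_P[(Q\otimes U)W(Q\otimes U)^T]$ and $B = Q\, \mbox{tr}_P(W)\, Q^T$; letting $X$ range over the elementary matrices $E_{ij}$ then forces $A = B$, which is exactly \eqref{lem:2}.

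I do not expect any genuinely hard step here — the statement is a corollary of Lemma 1, and the only substantive ingredients are the Kronecker mixed-product identity and the single use of $U^T U = I_N$. The one point requiring care is the bookkeeping of matrix sizes, so that both invocations of \eqref{lem:1.1} are legitimate, i.e.\ the ``outer'' factor really has the form $X \otimes I_N$ with $X$ of dimension $P$. Alternatively, one can sidestep the pairing argument and compute $\mbox{tr}_P$ block-by-block: the $(a,b)$ block of $(Q\otimes U)W(Q\otimes U)^T$ is $\sum_{c,d} Q_{ac}Q_{bd}\, U W_{cd} U^T$, whose trace equals $\sum_{c,d} Q_{ac} Q_{bd}\,\tr{W_{cd}} = (Q\,\mbox{tr}_P(W)\,Q^T)_{ab}$ by cyclicity and orthogonality, giving the result directly.
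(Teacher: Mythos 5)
Your proof is correct and rests on exactly the same ingredients as the paper's: Lemma 1, the Kronecker mixed-product identity, cyclicity of the trace, and the single use of $U^TU=I_N$. The paper simply extracts the $(i,j)$ entry directly as $\tr{(Q_{i}\otimes U)W(Q_{j}\otimes U)^T}$ rather than pairing against a general test matrix $X$ and specializing to $E_{ij}$ at the end (your closing block-by-block alternative is essentially the paper's version), so this is the same argument in slightly different packaging.
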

\begin{proof}

\begin{align*}
\mbox{tr}_P \left[ \left( Q \otimes U \right) W \left( Q \otimes U \right)^T \right]_{ij} &= \tr{ \left( Q_i \otimes U \right) W \left( Q_j \otimes U \right)^T } \\
	&= \tr{ \left( (Q_j^T Q_i) \otimes I \right) W } \\
	&= \tr{  (Q_j^T Q_i) \mbox{tr}_P \left( W \right) } \\
	&=Q_i \mbox{tr}_P \left( W \right) Q_j^T
\end{align*}

\end{proof}

The partial trace is no longer cyclic. However, it is true that
\[ \mbox{tr}_P \left( (A \otimes I) B \right) = \mbox{tr}_P \left( B (A \otimes I) \right) \]

\noindent Another useful computation enables the block trace of $\Sigma$, which is the inverse of a sum of Kronecker products, to be taken without ever evaluating $NP \times NP$ matrices.
\begin{mylem}
\begin{eqnarray}
\mbox{tr}_P \left[ \left( A \otimes I + B \otimes X \right)^{-1} \right] = A^{-1/2} Q  \Lambda Q^T A^{-1/2} \label{eq:lem-3}
\end{eqnarray}
where
\begin{align*}
Q \Lambda_1 Q^T			:&= \mbox{Spec Decomp} \left( A^{-1/2} B A^{-1/2} \right) \\
\Lambda_2	:&= \mbox{Eigenvalues} \left( X \right) \\
\Lambda	:&= \mbox{tr}_P \left( \left[ I + \Lambda_1 \otimes \Lambda_2 \right]^{-1} \right) \\
\end{align*}
\end{mylem}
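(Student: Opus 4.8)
The plan is to reduce the left-hand side to a form where Lemma 2 applies directly. First I would symmetrize the matrix $A \otimes I + B \otimes X$ by factoring out $A^{1/2} \otimes I$ from both sides: writing $A \otimes I + B \otimes X = \left( A^{1/2} \otimes I \right) \left( I \otimes I + (A^{-1/2} B A^{-1/2}) \otimes X \right) \left( A^{1/2} \otimes I \right)$, so that
\begin{align*}
\left( A \otimes I + B \otimes X \right)^{-1} = \left( A^{-1/2} \otimes I \right) \left( I + (A^{-1/2} B A^{-1/2}) \otimes X \right)^{-1} \left( A^{-1/2} \otimes I \right).
\end{align*}
Next, taking the block trace $\mbox{tr}_P$ of both sides: since the outer factors have the form $A^{-1/2} \otimes I$ (with the identity in the $N$-block slot), I can pull them out of $\mbox{tr}_P$ using the observation stated just after Lemma 2 that $\mbox{tr}_P\left( (G \otimes I) M \right) = G\, \mbox{tr}_P(M)$ and $\mbox{tr}_P\left( M (G \otimes I) \right) = \mbox{tr}_P(M)\, G$. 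This gives $\mbox{tr}_P\left[ \left( A \otimes I + B \otimes X \right)^{-1} \right] = A^{-1/2}\, \mbox{tr}_P\left[ \left( I + (A^{-1/2} B A^{-1/2}) \otimes X \right)^{-1} \right] A^{-1/2}$.

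Now I would diagonalize both Kronecker factors inside the remaining inverse. With the spectral decompositions $A^{-1/2} B A^{-1/2} = Q \Lambda_1 Q^T$ and $X = U \Lambda_2 U^T$ (where $U$ is orthogonal and $\Lambda_2$ collects the eigenvalues of $X$), we have $(A^{-1/2} B A^{-1/2}) \otimes X = (Q \otimes U)(\Lambda_1 \otimes \Lambda_2)(Q \otimes U)^T$, using the mixed-product property of Kronecker products and the fact that $Q \otimes U$ is orthogonal. Hence $\left( I + (A^{-1/2} B A^{-1/2}) \otimes X \right)^{-1} = (Q \otimes U)\left( I + \Lambda_1 \otimes \Lambda_2 \right)^{-1}(Q \otimes U)^T$. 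Applying Lemma 2 with this orthogonal conjugation (the roles of $Q$ and $U$ matching its statement), the block trace becomes $Q\, \mbox{tr}_P\left[ \left( I + \Lambda_1 \otimes \Lambda_2 \right)^{-1} \right] Q^T = Q \Lambda Q^T$, which is exactly the definition of $\Lambda$ in the statement. Combining with the previous display yields the claimed identity.

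The only subtle point — and the step I would treat most carefully — is matching the ordering conventions: Lemma 2 as stated conjugates by $Q \otimes U$ with $Q$ in the "$P$" (outer) slot and $U$ in the "$N$" (inner) slot, whereas in $A \otimes I + B \otimes X$ the matrix $X$ sits in the inner slot and $A, B$ in the outer slot, so I need to confirm that the spectral factor $Q$ of $A^{-1/2} B A^{-1/2}$ really does land in the outer position when I form $(A^{-1/2} B A^{-1/2}) \otimes X$. This is just bookkeeping, but it is where an off-by-a-transpose error would creep in. Everything else is a routine application of the mixed-product property of $\otimes$ and the two preceding lemmas; no estimates or limiting arguments are involved.
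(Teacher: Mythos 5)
Your proof is correct and follows essentially the same route as the paper: symmetrize with $A^{1/2}\otimes I$, diagonalize via the orthogonal $Q\otimes U$, and reduce everything to the block trace of $\left[ I + \Lambda_1\otimes\Lambda_2\right]^{-1}$. The only organizational difference is that you pull the factors $A^{-1/2}\otimes I$ out of $\mbox{tr}_P$ first via $\mbox{tr}_P\left( (G\otimes I)M\right) = G\,\mbox{tr}_P(M)$ --- which, note, is not literally the remark after Lemma 2 (that remark only asserts commutation inside $\mbox{tr}_P$), though it is true by a one-line block computation --- whereas the paper instead absorbs $A^{-1/2}$ into the non-orthogonal matrix $T = A^{-1/2}Q$ and evaluates the block trace entrywise via Lemma 1 (whose proof of Lemma 2 in fact never uses orthogonality of $Q$, only of $U$).
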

\begin{proof}

Let $Q \Lambda_1 Q^T$ be the eigendecomposition of $A^{-1/2} B A^{-1/2} $, and let $U \Lambda_2 U^T$ be the eigendecomposition of $X$. Also define $\Lambda' = \Lambda_1 \otimes \Lambda_2$. Then
\begin{align*}
\left( A \otimes I + B \otimes X \right)^{-1}	&= \left(  A^{-1/2} \otimes I  \right)  \left[ I + \left( A^{-1/2} B A^{-1/2} \right) \otimes X \right]^{-1} \left(  A^{-1/2} \otimes I  \right) \\
	&= \left(  A^{-1/2} Q \otimes U \right) \left[ I + \Lambda' \right]^{-1} \left( Q^T A^{-1/2} \otimes U^T \right) \\
\end{align*}
Defining $T = A^{-1/2} Q$ and $\Lambda = \mbox{tr}_P \left(\left[ I + \Lambda' \right]^{-1} \right) = \mbox{tr}_P \left(\left[ I + \Lambda_1 \otimes \Lambda_2 \right]^{-1} \right)$
\begin{align*}
\mbox{tr}_P \left( \left( A \otimes I + B \otimes X \right)^{-1} \right)_{ij} &= \tr{ \left[ \left(  T \otimes U \right) \left[ I + \Lambda' \right]^{-1} \left(T^T \otimes U^T \right) \right]_{[i,j]} } \\
	&= \tr{ \left(  T_{i,} \otimes U \right) \left[ I + \Lambda' \right]^{-1} \left( (T_{j,})^T \otimes U^T \right) } \\
	&= \tr{ \left( \left( T_{j,}^T T_{i,} \right) \otimes I \right) \left[ I + \Lambda' \right]^{-1} } \\
	&= \tr{ \left( T_{j,}^T T_{i,} \right) \mbox{tr}_P \left(\left[ I + \Lambda' \right]^{-1} \right) }  \tag{by Lemma 1} \\
	&= T_{i,} \Lambda T_{j,}^T \implies \\
\mbox{tr}_P \left( \left( A \otimes I + B \otimes X \right)^{-1} \right) &= T \Lambda T^T \\
	&= A^{-1/2} Q \Lambda Q^T A^{-1/2}
\end{align*}

\end{proof}

\noindent Block traces of outer products can also be efficiently computed.
\begin{mylem}
\begin{eqnarray} 
\mbox{tr}_P \left( u v^T \left[ I \otimes X \right]  \right) = \left( V^T X U \right)^T = U^T X^T V \label{lem:4}
\end{eqnarray}
\end{mylem}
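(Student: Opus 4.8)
The plan is to reduce the claim to the block structure of $NP\times NP$ matrices under the column-wise vectorization convention, exactly as in the proof of Lemma~1. First I would set $U := \mbox{vec}^{-1}(u)$ and $V := \mbox{vec}^{-1}(v)$, so that $U,V\in\mathbb{R}^{N\times P}$; for $I\otimes X$ to be conformable with $uv^T$ one must have $I=I_P$ and $X\in\mathbb{R}^{N\times N}$. Under column-wise vec, coordinate $(a-1)N+i$ of $u$ equals $U_{i,a}$, so the $(a,b)$ block of $uv^T$ (an $N\times N$ matrix) is the outer product $U_{,a}V_{,b}^T$, while $I_P\otimes X$ is block diagonal with every diagonal block equal to $X$.

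Next I would multiply block by block: the $(a,b)$ block of $uv^T(I_P\otimes X)$ is $U_{,a}V_{,b}^T X$. Taking the ordinary trace of this $N\times N$ matrix and using cyclicity gives the scalar $\tr{U_{,a}V_{,b}^T X}=V_{,b}^T X U_{,a}$, which by the definition of $\mbox{tr}_P$ is precisely the $(a,b)$ entry of $\mbox{tr}_P\!\left(uv^T(I_P\otimes X)\right)$. Since $V_{,b}^T X U_{,a}=(U^T X^T V)_{ab}$, collecting over all $a,b$ yields $\mbox{tr}_P\!\left(uv^T(I_P\otimes X)\right)=U^T X^T V=(V^T X U)^T$, which is the assertion.

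I do not expect a genuine obstacle here; the only point requiring care is the bookkeeping, namely that column-wise vec forces the $P\times P$ grid of $N\times N$ blocks, so that the block-diagonal form of $I_P\otimes X$ and the definition of $\mbox{tr}_P$ line up correctly. As a cleaner alternative that avoids explicit block indices, I would test against an arbitrary $P\times P$ matrix $W$ and invoke Lemma~1: $\tr{W\,\mbox{tr}_P(uv^T(I_P\otimes X))}=\tr{(W^T\otimes I_N)uv^T(I_P\otimes X)}=v^T(W^T\otimes X)u=\tr{V^T X U W}$, using $(W^T\otimes X)\mbox{vec}(U)=\mbox{vec}(XUW)$ together with $v^T\mbox{vec}(B)=\tr{V^T B}$; since this holds for every $W$, the matrices $\mbox{tr}_P(uv^T(I_P\otimes X))$ and $(V^T X U)^T$ must coincide.
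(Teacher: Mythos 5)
Your proposal is correct and follows essentially the same route as the paper's proof: identify the $(a,b)$ block of $uv^T(I_P\otimes X)$ as $U_{,a}V_{,b}^T X$ via the column-wise vec convention, take its trace using cyclicity, and read off the $(a,b)$ entry of $U^T X^T V$. The alternative duality argument against an arbitrary $W$ via Lemma~1 is a valid bonus but not needed.
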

\begin{proof}

Define $U = \mbox{vec}^{-1}(u)$, where $\mbox{vec}^{-1}$ maps the $NP$ vector $u$ to an $N \times P$ matrix $U$ by filling it column-wise. Define the $i$th multi-index 
\[ [i] := {(i-1)*N+0:(N-1)} \]
Then $u_{[i]} = U_{,i}$, and
\begin{align*}
\mbox{tr}_P \left( u v^T \left[ I \otimes X \right] \right)_{ij} &= \tr{ u_{[i]} v_{[j]}^T X } = \tr{ U_{,i} V_{,j}^T X } = V_{,j}^T X U_{,i} \implies \\
\mbox{tr}_P \left( u v^T \left[ I \otimes X \right]  \right) &= \left( V^T X U \right)^T = U^T X^T V
\end{align*}
\end{proof}

\noindent The next lemma computes block traces of $\mu$ quadratic forms.
\begin{mylem}
Define $T = \left[ I + (A^{-1}B) \otimes X \right]^{-1}$ for invertible matrices $A$, $B$ and $X$. Then
\begin{eqnarray}
\mbox{tr}_p \left( T y y^T T^T \left( I \otimes X \right) \right)	&=&  \left( A^{-1/2} Q S^T \right) \Lambda_X \left( A^{-1/2} Q S^T \right)^T  \label{eq:mu-id1} \\
\mbox{tr}_p \left( T y y^T T^T \right)	&=&  \left( A^{-1/2} Q S^T \right) \left( A^{-1/2} Q S^T \right)^T  \label{eq:mu-id2}
\end{eqnarray}
This uses the eigendecompositions
\[ Q \Lambda Q^T = A^{-1/2} B A^{-1/2} \mbox{ and }  U \Lambda_X U^T = X \]
and
\begin{eqnarray}
S	&:=& \mbox{vec}^{-1} \left( \diag{ \left[ I + \Lambda \otimes \Lambda_X \right]^{-1} } \right) \ast \left( U^T Y A^{1/2} Q \right) \label{eq:S}
\end{eqnarray}

\end{mylem}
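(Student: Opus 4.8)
The plan is to reduce both identities to Lemma 4 by first writing $Ty$ explicitly as $\text{vec}(M)$ for a small $N\times P$ matrix $M$, mirroring what the efficient-computation section does for $\mu$. The first step is to diagonalise the Kronecker product inside $T$, exactly as in the proof of Lemma 3. Writing $A^{-1}B = A^{-1/2}\bigl(A^{-1/2}BA^{-1/2}\bigr)A^{1/2} = A^{-1/2}Q\Lambda Q^T A^{1/2}$ and $X = U\Lambda_X U^T$, the mixed-product rule for Kronecker products gives $(A^{-1}B)\otimes X = \bigl(A^{-1/2}Q\otimes U\bigr)\bigl(\Lambda\otimes\Lambda_X\bigr)\bigl(Q^T A^{1/2}\otimes U^T\bigr)$, and orthogonality of $Q$ and $U$ shows $\bigl(A^{-1/2}Q\otimes U\bigr)^{-1} = Q^T A^{1/2}\otimes U^T$. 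Hence $I + (A^{-1}B)\otimes X = \bigl(A^{-1/2}Q\otimes U\bigr)\bigl(I+\Lambda\otimes\Lambda_X\bigr)\bigl(Q^T A^{1/2}\otimes U^T\bigr)$ and therefore $T = \bigl(A^{-1/2}Q\otimes U\bigr)\bigl(I+\Lambda\otimes\Lambda_X\bigr)^{-1}\bigl(Q^T A^{1/2}\otimes U^T\bigr)$, in which the middle factor is diagonal; the invertibility hypotheses on $A$, $B$, $X$ are precisely what legitimise $A^{\pm1/2}$ and these decompositions.

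Next I would push $y = \text{vec}(Y)$ through this factorisation using the identity $\text{vec}(MXN) = (N^T\otimes M)\text{vec}(X)$. The rightmost factor yields $\bigl(Q^T A^{1/2}\otimes U^T\bigr)y = \text{vec}\bigl(U^T Y A^{1/2}Q\bigr)$; multiplication by the diagonal matrix $\bigl(I+\Lambda\otimes\Lambda_X\bigr)^{-1}$ is a Hadamard product with its diagonal, and since $\text{vec}^{-1}$ commutes with Hadamard products this produces exactly $\text{vec}(S)$ with $S$ as in \eqref{eq:S}; the leftmost factor then gives $Ty = \text{vec}\bigl(USQ^T A^{-1/2}\bigr)$. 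I would then set $M := USQ^T A^{-1/2}$.

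With $Ty$ in this form the conclusion is immediate from Lemma 4. Since $Ty\,y^T T^T = (Ty)(Ty)^T$ and $\text{vec}^{-1}(Ty) = M$, Lemma 4 applied with the matrix there equal to $X$ (respectively $I$) gives $\trp{Ty\,y^T T^T (I\otimes X)} = M^T X M$ and $\trp{Ty\,y^T T^T} = M^T M$. Substituting $M = USQ^T A^{-1/2}$ and using $U^T U = I$ together with $U^T X U = \Lambda_X$ collapses these to $A^{-1/2}QS^T\Lambda_X S Q^T A^{-1/2}$ and $A^{-1/2}QS^T S Q^T A^{-1/2}$, which are precisely $\bigl(A^{-1/2}QS^T\bigr)\Lambda_X\bigl(A^{-1/2}QS^T\bigr)^T$ and $\bigl(A^{-1/2}QS^T\bigr)\bigl(A^{-1/2}QS^T\bigr)^T$, i.e.\ \eqref{eq:mu-id1} and \eqref{eq:mu-id2}.

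There is no deep obstacle here; the proof is entirely Kronecker/$\text{vec}$ bookkeeping. The step most prone to error — and the one I would write out most carefully — is the middle one: keeping the convention $\text{vec}(MXN) = (N^T\otimes M)\text{vec}(X)$ consistent throughout, tracking the transposes and the left/right placement of $Q$, $A^{1/2}$ and $U$, and in particular verifying that the diagonal factor $\bigl(I+\Lambda\otimes\Lambda_X\bigr)^{-1}$ acting on $\text{vec}\bigl(U^T Y A^{1/2}Q\bigr)$ reproduces the Hadamard-product definition of $S$ verbatim. That identification is what pins the abstract manipulation to the closed form asserted in the statement.
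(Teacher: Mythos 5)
Your proof is correct and follows essentially the same route as the paper: diagonalise $T$ through the eigendecompositions of $A^{-1/2}BA^{-1/2}$ and $X$, identify the diagonal action as the Hadamard product defining $S$, and finish with Lemma 4. The only cosmetic difference is that the paper keeps $Ty = \left(\left(A^{-1/2}Q\right)\otimes U\right)s$ and invokes Lemma 2 to pull the conjugation out of the block trace before applying Lemma 4 to $s s^T (I\otimes D)$, whereas you reshape $Ty$ into the matrix $U S Q^T A^{-1/2}$ and apply Lemma 4 once, absorbing $U$ via $U^T X U = \Lambda_X$.
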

\begin{proof}

First, simplify $Ty$:
\begin{align*}
Ty	&= \left[ I + (A^{-1}B) \otimes X \right]^{-1} y	\\
	&= \left(A^{-1/2} \otimes I\right) \left[ I + \left(A^{-1/2} B A^{-1/2} \right) \otimes X \right]^{-1} \left(A^{1/2} \otimes I\right) y \\
	&= \left(A^{-1/2} \otimes I\right) \left(Q \otimes U \right) \left[ I + \Lambda \otimes \Lambda_X \right]^{-1} \left(Q \otimes U \right)^T \left(A^{1/2} \otimes I\right) y \\
	&= \left(  \left( A^{-1/2} Q \right) \otimes U \right) s  \tag{+} \\
s	:&= \diag{ \left[ I + \Lambda \otimes \Lambda_X \right]^{-1} } \ast \mbox{vec} \left( U^T Y A^{1/2} Q \right)
\end{align*}

Let $X' = U D U^T$ for arbitrary $D$. Then
\begin{align*}
\mbox{tr}_p &\left( Ty y^T T^T \left( I \otimes X' \right) \right) \\
	&= \mbox{tr}_p \left[ \left(  \left( A^{-1/2} Q \right) \otimes U \right) s s^T \left( \left( Q^T A^{-1/2} \right) \otimes U^T \right) \left( I \otimes (UDU^T) \right) \right]	\tag{by (+)} \\
	&= \mbox{tr}_p \left[ \left(  \left( A^{-1/2} Q \right) \otimes U \right) s s^T \left( I \otimes D \right) \left( \left( Q^T A^{-1/2} \right) \otimes U^T \right)  \right] \\
	&=  A^{-1/2} Q \left[ \mbox{tr}_P \left( s s^T \left( I \otimes D \right) \right) \right] Q^T A^{-1/2} \tag{by \eqref{lem:2}} \\
	&=  A^{-1/2} Q \left[ S^T D S \right] Q^T A^{-1/2} \tag{by \eqref{lem:4}}
\end{align*}
Now line \eqref{eq:mu-id1} follows from taking $D= \Lambda_X \iff X' = X$ and \eqref{eq:mu-id2} follows from taking $D=I \iff X'=I$.

\end{proof}

\end{document}